\newtheorem{theorem}{Theorem}
\newtheorem{proof}{Proof}
\newtheorem{corollary}{Corollary}
\newtheorem{lemma}{Lemma}
\title{Power Spectral Density Estimation via Universal Truncated
Order Statistics Filtering}
\author{David Campos Anchieta, John R. Buck}
\begin{document}
\maketitle
\begin{abstract}
Loud transient signals in underwater acoustic data increase the bias and
variance of background noise power spectral density (PSD) estimates based on
sample mean.
Recently, two PSD estimators mitigated the loud transient impact on PSD
estimates by applying order statistics filtering (OSF).
The first, the Schwock and Abadi Welch Percentile, scales a single
rank order statistic (OS) of consecutive periodograms.
The second, the truncated linear order statistics filter, is a weighted
sum of OS up to a chosen rank.
In order to minimize variance, both OSFs must carefully choose the highest rank
that still eliminates the loud transients.
However, in real-time applications in dynamic environments, loud transients
occur at unpredictable rates, requiring dynamic adjustment of the OSF ranks to
keep low bias and variance.
To circumvent the challenges of real-time rank selection, this paper proposes a
convex sum of OSFs across ranks with blending weights that are sequentially
adjusted to favor the lowest variance OSFs over a recent time window.
The performance of the blended sum provably approaches the performance of the
best fixed rank OSF.
Simulations and real data confirm the blended OSFs effectively filter loud
transients out of spectrograms without explicitly choosing a threshold rank.   

\end{abstract}
\section{Introduction}

Many underwater acoustic signal processing applications depend on accurate
background noise power spectral density (PSD) estimates.
Estimating the ambient noise PSD is a critical step in data pre-whitening for
constant false alarm rate (CFAR) detectors \cite{gandhi_analysis_1988,
abraham_underwater_2019}.
Background PSDs also contain information about rain or surface wind for remote
sensing of meteorological events, for example \cite{ma_passive_2005,
mallary_listening_2023, trucco_compounding_2022, trucco_introducing_2023}.
However, underwater acoustic data often include loud transients -- brief,
high-amplitude signals --, which introduce bias and increase the variance of
PSD estimators such as Welch's Overlapped Segment Averaging (WOSA)
\cite{welch_use_1967}.
Estimating the background noise PSD of acoustic environments with loud
transients requires an alternative approach that is more robust to outliers
than the sample mean.

To mitigate the increased bias and variance introduced by loud outliers, order
statistics filters (OSF) exclude data samples that exceed a chosen threshold
rank in the order statistics (OS) when estimating parameters
\cite{rabiner_applications_1975, bovik_generalization_1983, david_order_2003}.
This paper focuses on two OSFs recently applied to PSD estimation.
The first, the Schwock and Abadi Welch Percentile (SAWP) method, replaces the
averaging step in the WOSA with an OS of the periodograms, rescaling each OS so
it is an unbiased estimate of the underlying power for exponential random
variables \cite{schwock_statistical_2021}.
The second, the truncated linear order statistics filter (TLOSF),
computes a weighted average over the lowest-ranked OS
\cite{anchieta_robust_2024, thomson_wt4_1977}.
These weights are chosen to minimize variance while maintaining an unbiased
estimate.
By selecting a threshold rank OS below the maximum, OSFs reduce the impact of
loud transients, while scaling and weighted averaging help maintain unbiased
estimates.

%To avoid the increased bias and variance caused by outliers in data, order
%statistics filters (OSF) disregard data samples that rank above a
%chosen threshold rank order statistics (OS) when estimating parameters
%\cite{schwock_statistical_2021, anchieta_robust_2023, thomson_spectrum_1977}.
%This paper focus on two order statistics filters that were recently applied to
%the spectral estimation problem.
%The Schwock and Abadi's Welch Percentile (SAWP), that replaces the averaging of
%the WOSA with an OS of the periodograms normalized by its own scale bias;
%and the TLOSF, that averages the OS up to the threshold rank with weights that
%minimize the estimator's variance while keeping it unbiased.
%By choosing a threshold rank OS lower than the maxima, the OSFs ensure the
%estimate is not affected by the loud transients.
%By normalizing the OS by its own scale bias, the SAWP ensures the resulting
%PSD estimate is unbiased.

%Estimators based on OS (such as sample median, for example) are more robust
%against the presence of outliers in data with the cost of a higher variance than
%the sample mean.
%When processing data in real-time, the time variability of loud transients in
%underwater acoustic data requires the OSFs to dynamically adjust their
%threshold rank to maintain low bias and variance.

However, the OSFs must carefully choose a fixed threshold rank for each noise
environment — a challenging task in real-time applications due to the
unpredictable nature of loud transients.
Both the frequency of occurrence and the power of the loud transients in
underwater acoustic data are time-varying, requiring real-time selection of the
threshold rank.
Consider two asymptotic extreme cases.
Setting the threshold rank to the minimum ensures maximum robustness to
transients but also maximizes the variance of the OSF.
Conversely, setting the threshold rank to the maximum (for TLOSF) or to the
80th percentile (for SAWP) minimizes variance but increases susceptibility to
bias and mean squared error (MSE) from loud transients.
Ideally, the OSF seeks a rank between these extremes, selecting the highest
threshold rank that still excludes the transients present in the current noise
environment while reducing the estimator variance.
Choosing the threshold rank of an OSF to balance robustness to loud transients
while reducing estimator variance remains a challenging model-order selection
problem.

%The real-time rank selection for the OSFs is a challenging model-order
%selection problem.

%While the theoretical variance of the SAWP estimator in scenarios without
%loud transients can be obtained as a closed form expression, the variance of
%the estimator also depends on how loud and how frequent the transients are.
%Choosing the optimal rank OS for the SAWP PSD estimator is a challenging
%model-order selection problem.

To circumvent the challenges of rank selection in real-time applications of
OSFs, this paper proposes a performance-weighted sum of OSFs across threshold
ranks.
The blending weights are sequentially adjusted over time to favor the OSFs
performing best over a recent time window.
Several signal processing applications have recently applied this same
universal method for model-order weighting \cite{tucker_performance_2019,
tucker_performance_2025, singer_universal_1999, buck_performanceweighted_2018,
erdim_covariance_2022, erdim_doubly_2022}.
This performance-based blending yields an OSF that is universal over the
selected threshold rank -- an estimator that performs nearly as well as the
best fixed-rank OSF in hindsight -- under the right conditions.
By dynamically adjusting its weights according their recent performance, the
universal OSFs (UOSFs) not only circumvent the threshold rank selection, but
adapt to changes in the occurrence of loud transients over time.
Both UOSFs were effective in smoothing spectrograms of underwater
acoustic data while filtering out loud broadband clicks.

The remainder of this paper is organized as follows.
Section \ref{sec:background} reviews some background on OSFs for PSD estimation.
Section \ref{sec:uosf} describes the universal method by model-order weighting
and its application on a UOSFs.
Section \ref{sec:results} presents simulation results comparing the universal
SAWP and universal TLOSF with their fixed-rank counterparts, and real data
comparing the UOSFs against WOSA in spectrograms of hydrophone recordings.
Finally, Section \ref{sec:conclusion} summarizes the results and draws
conclusions.

\section{Background}
\label{sec:background}

This section is divided into four parts that present background material while
establishing the mathematical notation of this paper.
The first part gives a brief description of the WOSA spectral estimator.
The second part describes the SAWP estimator as an alternative to the WOSA.
The third part presents the TLOSF as an alternative OSF.
The last part introduces a generalized mathematical representation for the OSFs,
which will be convenient when developing a universal version of both.

\subsection{Welch's Overlapped Segment Averaging}

The WOSA is a classic spectral estimator that averages periodograms of
overlapping segments of data into a background noise PSD estimate
\cite{welch_use_1967}.
Each periodogram is the squared magnitude of a tapered segment of the time
domain signal \cite{kay_modern_1988}:
\begin{equation}
	\pgram{n} = \frac{1}{L} \left| \sum_{l=0}^{L-1}
	w[l]x[n D+l] e^{-j2\pi f_0 l} \right|^2,
	\label{eq:welch_periodogram}
\end{equation}
where $x[l]$ is the signal in the time domain, $w[l]$ is the taper window
function, $L$ is the number of time samples producing each periodogram, $D$ is
the shift between segments in number of samples, $0\leq n \leq N-1$ with $N$
being the total number of periodograms,  and $0\leq f_0 \leq 0.5$ is the
frequency bin normalized by the Nyquist rate.
Each short-time WOSA PSD estimate is an average of a number $R$ of consecutive
periodograms:
\begin{equation}
\begin{gathered}
	\pwosa{t}= \frac{1}{R} \sum_{n=0}^{R-1} \pgram{tQ + n},
	\label{eq:wosa}
\end{gathered}
\end{equation}
where $R\leq N$ and $0<t<T-1$, with $T$ being the total number of short-time
PSD estimates, and ($R-Q$) is the overlap between consecutive PSD estimates in
number of periodograms.
Each PSD estimate incorporates $[(R-1)D+L]/f_s$ seconds of the time domain
signal, where $f_s$ is the sampling rate.
The overlaps between periodograms $D$ and between PSD estimates $Q$ can be
bigger or smaller according to the desired time resolution for the PSD
estimates.
In order for the OSF to filter loud transients, each PSD estimate must
incorporate a segment of the input signal that is longer than the longest loud
transient.
%Therefore, a set of PSD estimates $\{\pwosa{0},\dots, \pwosa{T-1}\}$ incorporates
%$[(T-1)Q+R][(R-1)D+L]/f_s$ seconds of the time domain signal.

%\note{Verify the if the $n_t$ is conveniente here}

%The relationship between $t$ and $n_t$, as well as $R$, define the time
%resolution of the PSD estimates.

%Note that there are two time scale changes between the time domain signal and
%spectral estimator, from $l$ in the time domain signal, to $n$ in the
%periodograms, to $t$ in the PSD estimates.
%First, a number $L$ of samples of the time domain signal are combined into one
%periodogram $\pgram{n}$.
%Then, a number $R$ of periodograms in the set $\{\pgram{n_t}, ..., \pgram{n_t + R - 1}\}$ are averaged into a single estimate of the
%PSD $\pwosa{t}$.
%The variable $n_t$ defines which time samples of the periodograms will compose
%PSD estimate $\pwosa{t}$.

%\centering
%\includegraphics[width=.65\linewidth]{figures/welch_example1.eps}

\subsection{The SAWP}

Loud transients -- occasional loud signals of short duration -- in underwater
acoustic data introduce bias and increase variance in estimators based on the
sample mean, such as the WOSA, thereby limiting their effectiveness.
An estimator derived from the sample median of the periodograms would mitigate
the bias caused by the loud transients \cite{beaton_fitting_1974,
rabiner_applications_1975, allen_findchirp_2012}.
Seeking a more efficient estimator, the SAWP estimator extends the range of
rank-based OS options beyond the sample median \cite{schwock_statistical_2021}.

%considering a gaussian distributed signal $x[l]$,
%\begin{equation}
%\end{equation}
%where, $l$ is the number of samples yielding a single periodogram, and $d$ is
%the step between periodograms in number of samples.

OSFs like the SAWP start by sorting the periodograms by magnitude from the
quietest to the loudest, such as:
\begin{equation*}
	\porder{t} \in \pgramset, \\
\end{equation*}
such that
\begin{equation*}
	\porder{t}  \leq \porder[r+1]{t}, r = 1,2,\dots,R,
\end{equation*}
where $\porder{t}$ is the $r$-th rank OS of the set of periodograms yielding a
short-time PSD estimate.
The $\tr$-th rank OS is also the $100 (\tr/R)$ percentile.
The magnitude-ordering of the periodograms concentrates the occasional loud
transients on the highest ranks, leaving clean background samples on the lowest
ranks.
However, despite being free of the loud transient bias, the low-rank OS of the
periodograms are still biased estimates of the background noise PSD.
The $\tr$-th rank SAWP estimator corrects the bias of the OS of the
periodograms normalizing it by a scaling constant
\cite{schwock_statistical_2021, anchieta_robust_2024}:
\begin{equation}
	\psawp[\tr]{t} = \porder[\tr]{t}/\alpha_{r_0}.
	\label{eq:sawp}
\end{equation}
The scaling constant $\alpha_{r_0}$ is the mean of the $\tr$-th rank OS of
a sample of $R$ independent and identically distributed (IID) realizations of a
standard (unit mean) exponential random variable, which is given by the sum of
fractions \cite{anchieta_robust_2023, david_order_2003,
schwock_statistical_2021, gandhi_analysis_1988}:
\begin{equation}
	\alpha_{r_0}  = \sum_{k=R-\tr+1}^{R} \frac{1}{k}.
	\label{eq:indep_increments_mean}
\end{equation}
The normalization in (\ref{eq:sawp}) assumes that the time-domain signal $x[l]$
is Gaussian-distributed and stationary.
From that assumption, it follows that the short-time periodograms $\pgram t $
can be modeled as a standard exponential random variable scaled by the true PSD
\cite{schwock_statistical_2021, anchieta_robust_2024}.
Therefore, scaling the OS of the periodograms by $1/\alpha_{r_0}$ results in an
unbiased estimator of the background noise PSD.

%The SAWP spectral estimator avoids the loud transient bias by using a scaled
%single-rank OS as the estimator of the background noise PSD.
%The probability distribution of the OS of exponential RVs is discussed in

%abraham_underwater_2019,schwock_statistical_2021}.

\subsection{The TLOSF}

%If a threshold rank $\tr$ chosen by SAWP is robust to the loud
%transients, so are all the lower ranks $r<\tr$ as well, but the SAWP does
%not include those lower rank OS in the estimate.

If outliers are absent, the SAWP estimator is unbiased regardless of the chosen
threshold rank $\tr$.
If outliers are present, the SAWP estimator with threshold rank $\tr$ will
remain unbiased as long as the threshold rank is low enough to exclude the loud
transients.
However, the SAWP discards the information contained in the lower rank OS
$r<\tr$ even though those data are also free from the loud transient bias.
The TLOSF incorporates the information of those lower rank OS to further reduce
the variance of the PSD estimate.

The TLOSF is a weighted average of all the OS of the periodograms up to the
threshold rank:
\begin{equation}
	\ptlosf{t} = \sum_{r=1}^{\tr} w_r \porder[r]{t}.
	\label{eq:tlosf_def}
\end{equation}
The weights $w_r$ result from a least squares optimization that
minimizes the variance of the resulting estimator while keeping it
unbiased \cite{anchieta_robust_2024, thomson_spectrum_1977,
lloyd_leastsquares_1952}:
\begin{equation}
w_r =
\begin{cases}
1/\tr, &r<\tr;\\
(R-\tr+1)/\tr, &r=\tr.\\
\end{cases}
\end{equation}
When both SAWP and TLOSF use rank $\tr$ and there are no outliers at rank
$\tr$ or below, TLOSF reduces the variance by about 0.5 dB compared to
SAWP.
Conversely, when choosing SAWP and TLOSF to have the same variance, TLOSF will
have a lower threshold rank, making it more robust to loud transients
\cite{anchieta_robust_2024}.

%If outliers are absent, the SAWP achieves its minimum variance with
%$r_0/R \approx 0.8$, while the TLOSF minimizes its variance when $r_0 = R$.
%However, a lower threshold rank may be more

\subsection{A Generalized Representation of the Order Statistics Filters}
\label{sec:generalized_osf}
For the remainder of this paper, it will be convenient to represent both SAWP
and TLOSF using a common notation of an inner product between a weight vector
and a vector of OS of the periodograms:
\begin{equation}
	\posf{t} = \osw \pordervec{t},
	\label{eq:posf}
\end{equation}
where
\begin{equation*}
	\pordervec{t} = [\porder[1]{t}, \dots, \porder[R]{t}]^T
\end{equation*}
is a vector of OS of consecutive periodograms sorted from smallest to largest.
For SAWP, the weight vector has a single nonzero entry:
\begin{equation}
	\osw = (1/\alpha_{r_0})[\vc{0}_{\tr-1}, 1, \vc{0}_{R-\tr}],
	\label{eq:sawp_weights}
\end{equation}
where $\vc{0}_{N}$ is a vector of $N$ zeros.
For TLOSF, the weight vector is nonzero for all entries on or below the
threshold rank:
\begin{equation}
	\osw = (1/\tr) [\vc{1}_{\tr-1}, (R-\tr+1), \vc{0}_{R-\tr}],
	\label{eq:tlosf_weights}
\end{equation}
where $\vc{1}_N$ is a $N$-dimensional vector of ones.
This common notation provides a convenient shared approach for the development
of a universal version of both OSFs.

\section{The Universal Order Statistics Filter}
\label{sec:uosf}

While both TLOSF and SAWP mitigate the bias and variance caused by loud
transients in data, the selection of the threshold rank is a challenging
model-order selection problem in practical real-time applications.
Choosing a lower threshold rank OS ensures robustness against occasional loud
transients, but generally increases the variance of the OSF.
Fig. \ref{fig:varosf} corroborates this by plotting the normalized variance of
TLOSF and SAWP versus threshold rank for $R=100$ as provided in
\cite{anchieta_robust_2024}.
For TLOSF, the variance is minimized at the maximum rank, while for SAWP it is
minimized around the 80th percentile.
Additionally, the power and occurrence of real world loud transients are
time-varying, requiring the OSF to adjust its threshold rank in real-time to
adapt to the changes in the environment noise.

\begin{figure}
	\centering
	\includegraphics[width=.9\linewidth]{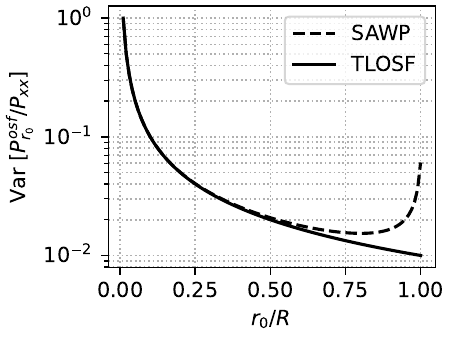}
	\caption{Normalized variance of the fixed-rank OSF as a function of the
	threshold rank $\tr$ for $R=100$.
	SAWP (dashed line) achieves its minimum variance with $\tr \approx 0.8 R$
	while TLOSF (solid line) achieves minimum variance with $\tr = R$.
	A lower threshold rank $\tr$ increases robustness against loud
	transients, but generally also increases variance.}
	\label{fig:varosf}
\end{figure}

The UOSF solves the rank selection problem of the OSFs by blending the
best performing fixed-rank OSFs according to their short-time unnormalized
sample second moment using a universal method for model-order weighting
\cite{singer_universal_1999}.
Instead of choosing a single threshold rank as in (\ref{eq:sawp_weights}) and
(\ref{eq:tlosf_weights}), the weights of a UOSF estimator are a linear
combination of fixed-rank weight vectors:
\begin{equation}
	\wu[t] = \sum_{r=1}^R \univmu{r}{t} \osw[r]
	\label{eq:univ_weights}
\end{equation}
where $\univmu{r}{t}$ is the blending weight for the $r$-th rank weight
vector at the time iteration $t$.
The softmax of the scaled loss function over the threshold ranks gives the
blending weights at each time iteration:
\begin{equation}
	\univmu{r}{t} = \frac{\exp \left( - \tfrac{1}{2c} \lossf[r]{t-1} \right)}
	{\sum_{k=1}^R \exp \left( - \tfrac{1}{2c} \lossf[k]{t-1} \right)},
	\label{eq:univ_coeffs}
\end{equation}
where $\lossf[r]{t}$ is the loss of the $r$-th rank OSF at the iteration $t$,
and $c$ is a hyperparameter that controls the blending weight's sensitivity to
the loss function.
Note that the blending weights are monotonically decreasing as a function of
the loss, meaning that OSF ranks with lower loss are assigned higher blending
weights.
The loss function is the sum of squares of the estimator over a fixed number of
previous iterations:
\begin{equation}
	\lf{t}{\vc{w},\hatpvec} = \sum_{t'=t-\tau+1}^t (\vc{w}[t'] \pordervec{t'})^2
	\label{eq:loss_function}
\end{equation}
where $\tau$ is the window size of the loss function in time iterations.
This loss definition holds for the fixed-rank weight vectors $\osw[r]$
as well as the universal weight vectors $\wu$, observing that the
fixed-rank weights are not time-dependent, i.e.  $\osw[\tr][t] =
\osw[\tr] \; \forall \; t$.
The inner product between the blended weight vector in (\ref{eq:univ_weights})
and the vector of OS of the short-time periodograms result in a UOSF spectral
estimator:
\begin{equation}
	\puosf {t} = \wu[t] \pordervec{t}.
	\label{eq:puosf}
\end{equation}
The block diagram in Figure \ref{fig:block_diagram} illustrates how the UOSF
estimator is generated.
The denominator in (\ref{eq:univ_coeffs}) guarantees that the blending weights
add up to one, $\sum_{r=1}^R \univmu{r}{t} = 1$, making $\puosf {t}$ a
convex combination of unbiased estimators, which is itself unbiased in pristine
data\footnote{Note that combining (\ref{eq:posf}),  (\ref{eq:univ_weights}),
and (\ref{eq:puosf}) results in $\puosf {t} = \sum_{r=1}^R \univmu{r}{t}
\posf[r]{t} $.}.

Choosing the unnormalized sample second moment as the loss function in
(\ref{eq:loss_function}) indirectly minimizes the MSE of the UOSF by minimizing
its variance.
Since the blending weights in (\ref{eq:univ_coeffs}) monotonically decrease as
a function of the loss, the ideal loss function reflects an optimality
criterion the UOSF intends to minimize.
While the MSE is a natural optimality criterion for parameter estimators such
as the SAWP and TLOSF, its dependence on the true background noise PSD makes it
unavailable in practical applications \cite{kay_fundamentals_1993}.
The sample second moment, however, can be determined in real-time applications
and, as Lemma \ref{lem:variance} states, is itself an upper bound on the
variance of the OSF.
Therefore, minimizing the sample second moment minimizes the variance of the
OSF.
For estimators that are unbiased or nearly so, i.e.  $\expectation{\vc{w}
\hatpvec} = P_{xx}$, where $P_{xx}$ is the true background noise PSD,
minimizing the variance essentially minimizes the MSE.

\begin{lemma}
Let $\expectation{\vc{w} \hatpvec} = P_{xx}$. Then:
\label{lem:variance}
\begin{equation}
\begin{aligned}
	\var{\vc{w} \hatpvec}
		&= \frac{1}{\tau} \lf{t}{\vc{w}, \hatpvec}
		- \left(\frac{1}{\tau} \sum_{t'=t-\tau+1}^t \vc{w}[t'] \hatpvec [t']
		\right)^2 \\
		&\approx \frac{1}{\tau} \lf{t}{\vc{w}, \hatpvec}
		- P_{xx}^2
\end{aligned}
\label{eq:variance}
\end{equation}
\end{lemma}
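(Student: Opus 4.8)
The plan is to derive the result from the standard variance decomposition $\var{X} = \expectation{X^2} - (\expectation{X})^2$ applied to the scalar OSF estimate $\vc{w}\hatpvec$, and then to replace the ensemble moments by their sample counterparts averaged over the loss window. The hypothesis $\expectation{\vc{w}\hatpvec} = P_{xx}$ directly fixes the subtracted term, so the only real content is relating the ensemble second moment $\expectation{(\vc{w}\hatpvec)^2}$ to the loss function in (\ref{eq:loss_function}).

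First I would write the ensemble variance of a single estimate as $\var{\vc{w}\hatpvec} = \expectation{(\vc{w}\hatpvec)^2} - (\expectation{\vc{w}\hatpvec})^2$, and assume the estimator is stationary across the $\tau$ iterations in the window, so that each scalar $\vc{w}[t']\hatpvec[t']$ is identically distributed with the common second moment $\expectation{(\vc{w}\hatpvec)^2}$ and common mean $P_{xx}$. Under this assumption the ensemble second moment is estimated by the sample second moment, and the key observation from (\ref{eq:loss_function}) is that $\tfrac{1}{\tau}\lf{t}{\vc{w},\hatpvec} = \tfrac{1}{\tau}\sum_{t'=t-\tau+1}^t (\vc{w}[t']\hatpvec[t'])^2$ is exactly that sample second moment.

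Next, for the first line of (\ref{eq:variance}), I would note that its right-hand side is literally the empirical variance over the window, namely the sample second moment minus the square of the sample mean $\tfrac{1}{\tau}\sum_{t'=t-\tau+1}^t \vc{w}[t']\hatpvec[t']$; this is an exact algebraic identity once the ensemble moments in the decomposition are replaced by their windowed sample estimates. For the second line I would invoke the unbiasedness hypothesis once more: since each term has mean $P_{xx}$, the sample mean concentrates on $\expectation{\vc{w}\hatpvec} = P_{xx}$, so its square is approximately $P_{xx}^2$, which produces the ``$\approx$'' in (\ref{eq:variance}).

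The main obstacle is making the passage from ensemble moments to windowed sample moments precise: it rests on the stationarity assumption over the loss window together with a law-of-large-numbers (large-$\tau$) argument, which the statement treats informally through the first-line equality and the second-line approximation. Once this approximation is granted, the consequence advertised in the surrounding text follows immediately, since $P_{xx}^2 \ge 0$ forces $\var{\vc{w}\hatpvec} \le \tfrac{1}{\tau}\lf{t}{\vc{w},\hatpvec}$, so the sample second moment upper-bounds the variance of the OSF.
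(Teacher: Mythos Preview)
Your proposal is correct and matches the paper's implicit reasoning. The paper does not give a separate proof of this lemma; it is stated as an immediate observation, with the surrounding text only noting that the sample second moment upper-bounds the variance. Your derivation---variance decomposition, identification of $\tfrac{1}{\tau}\lf{t}{\vc{w},\hatpvec}$ as the sample second moment, recognition of the first line as the sample variance, and use of the unbiasedness hypothesis to replace the sample mean by $P_{xx}$---is exactly the content the paper leaves to the reader, and your closing remark about the upper bound is precisely the use the paper makes of the lemma.
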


Under the right conditions, the difference in performance between the UOSF
and the best fixed-rank OSF, as measured by the variance, vanishes
asymptotically with growing $\tau$.
Theorem \ref{thm:regret} sets the upper bound on the \emph{per sample} loss of
the UOSF as the per sample loss of the best fixed-rank OSF plus some residual.
Defining the \emph{regret} as the difference between the variance of the UOSF
and the variance of the best fixed-rank OSF, it follows from
(\ref{eq:variance}) that:
\begin{equation}
\begin{aligned}
	\varepsilon_t \left( \wu, \hatpvec \right) &\stackrel{\Delta}{=}
		\var{\wu \hatpvec} - \min_r \var{\osw[r] \hatpvec}\\
		&= \tfrac{1}{\tau}\lf{t}{\wu, \hatpvec}
		- \tfrac{1}{\tau} \min_r \lossf[r]{t}.
\end{aligned}
\end{equation}
The corollary of Theorem \ref{thm:regret} concludes that, as the loss function
window size $\tau$ grows, the regret approaches zero.

\begin{theorem}
\label{thm:regret}
Let $\osw[r]\pordervec{t} \leq A \; \forall \; t$, and $t \geq \tau$.
Then:
\begin{equation}
\tfrac{1}{\tau}\lf{t}{\wu, \hatpvec}
	\leq \tfrac{1}{\tau} \min_r \lossf[r]{t} + \tfrac{2A}{\tau}\ln R.
	\label{eq:lossbound}
\end{equation}
\end{theorem}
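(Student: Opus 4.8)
Inequality~(\ref{eq:lossbound}) is the regret guarantee of the exponentially weighted (``Hedge'') forecaster from universal prediction, so the plan is a log-partition potential argument. Fix the target time $t$, and for each step $t'$ in the window $[t-\tau+1,t]$ write $x_r(t') = \mathbf{w}_r^{T}\mathbf{p}(t')$ for the prediction of the rank-$r$ filter, $\ell_r(t') = x_r(t')^2$ for its instantaneous loss, and $\mu_r(t')$ for the blending weight of~(\ref{eq:univ_coeffs}). Let $L_r(s) = \sum_{s'=t-\tau+1}^{s}\ell_r(s')$ be the accumulated loss of filter $r$, so that $L_r(t)=\lossf[r]{t}$ is the windowed loss and the weights are the Gibbs weights $\mu_r(s)\propto\exp(-L_r(s-1)/2c)$. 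Define the potential $\Phi(s) = -2c\,\ln\!\big(\tfrac{1}{R}\sum_{r}\exp(-L_r(s)/2c)\big)$. Then $\Phi(t-\tau)=0$, while bounding the final sum below by its largest term yields $\Phi(t)\le \min_r L_r(t)+2c\ln R$; this is where the $\ln R$ term enters. A short calculation also collapses the increment to $\Phi(s)-\Phi(s-1) = -2c\,\ln\!\big(\sum_r \mu_r(s)\,e^{-\ell_r(s)/2c}\big)$, so the increments telescope to $\Phi(t)$.

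The substantive step is the per-step bound $\big(\sum_r \mu_r(s)x_r(s)\big)^2 \le \Phi(s)-\Phi(s-1)$, whose left side is exactly the instantaneous loss of the UOSF. Exponentiating, this is equivalent to $\sum_r \mu_r(s)\,e^{-x_r(s)^2/2c} \le \exp\!\big(-(\sum_r \mu_r(s)x_r(s))^2/2c\big)$, which is precisely Jensen's inequality $E[\phi(X)]\le\phi(E[X])$ for the map $\phi(u)=e^{-u^2/2c}$ under the blending distribution $\mu(s)$. Now $\phi$ is concave exactly where $u^2\le c$, and the hypothesis $\mathbf{w}_r^{T}\mathbf{p}(t')\le A$ confines every prediction to this region once $c$ is taken large enough relative to $A$; it is this exp-concavity, rather than a Hoeffding-type estimate, that produces a regret logarithmic in $R$ with no $\sqrt{\tau}$ term. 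Summing the per-step bounds over the window, telescoping, applying the upper bound on $\Phi(t)$, and dividing by $\tau$ then gives~(\ref{eq:lossbound}), with the constant multiplying $\ln R$ fixed by the admissible choice of $c$.

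I expect the main obstacle to be this concavity/scaling step: one must verify that the bound $A$ really keeps each $x_r(s)$ inside the interval on which $e^{-u^2/2c}$ is concave, which pins down the admissible range of $c$ and hence the exact constant in front of $\ln R$. A second, more technical point is that the weights in~(\ref{eq:univ_coeffs}) use a sliding window $[t'-\tau,t'-1]$ whose left edge moves with $t'$, whereas the telescoping above needs the losses accumulated from the fixed window start $t-\tau+1$; the two coincide in the idealized universal-prediction setting, but in general one must either analyze that idealization or carry correction terms for the losses that leave the window, and I would settle this before defining $\Phi$.
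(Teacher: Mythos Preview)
Your proposal is correct and mirrors the paper's own proof: the paper's ``pseudo-probability'' objects $\mathrm{p}_r$, $\mathrm{p}_{\mathrm{avg}}$, $\mathrm{p}_{\mathrm{u}}$ are exactly your log-partition potential and its increments in different notation, and both arguments hinge on the same Jensen inequality for the concave map $u\mapsto e^{-u^2/2c}$ once $c$ dominates the squared predictions. Your flagged sliding-window concern is legitimate; the paper sidesteps it by carrying out the argument for the window $[1,\tau]$ (i.e., effectively $t=\tau$), where the cumulative losses defining the blending weights coincide with those defining the potential, and then setting $c=A$ to obtain the stated constant.
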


\begin{proof}
See appendix.
\end{proof}

%\begin{equation*}
%\end{equation*}
%\begin{equation*}
		%\hatpvec - P_{xx}(f_0) \right)^2}
%\end{equation*}

\begin{corollary}
From (\ref{eq:lossbound}), it follows that
\begin{equation*}
	\lim_{\tau \to \infty} \varepsilon_t \left( \vc{w}[t], \hatpvec \right) = 0.
\end{equation*}
\end{corollary}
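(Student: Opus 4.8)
The plan is to derive the Corollary directly from Theorem~\ref{thm:regret} by a squeeze argument on the regret. First I would replace $\varepsilon_t$ by its loss-form expression. Starting from the definition $\varepsilon_t(\wu, \hatpvec) = \var{\wu \hatpvec} - \min_r \var{\osw[r] \hatpvec}$ and invoking Lemma~\ref{lem:variance}, the subtracted term in (\ref{eq:variance}) equals $P_{xx}^2$ for every (nearly) unbiased estimator, so it cancels in the difference of variances and leaves $\varepsilon_t(\wu, \hatpvec) = \tfrac{1}{\tau}\lf{t}{\wu, \hatpvec} - \tfrac{1}{\tau}\min_r \lossf[r]{t}$. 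This is exactly the second line of the regret definition, so the substitution is routine bookkeeping.

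The core step is then to insert the loss bound (\ref{eq:lossbound}). Subtracting $\tfrac{1}{\tau}\min_r \lossf[r]{t}$ from both sides of Theorem~\ref{thm:regret} gives immediately $\varepsilon_t(\wu, \hatpvec) \leq \tfrac{2A}{\tau}\ln R$. Because $A$ and $R$ are held fixed while the window $\tau$ grows, the right-hand side is $O(1/\tau)$, whence $\lim_{\tau\to\infty}\tfrac{2A}{\tau}\ln R = 0$ and therefore $\limsup_{\tau\to\infty}\varepsilon_t(\wu, \hatpvec) \leq 0$.

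I expect the main obstacle to be promoting this $\limsup$ inequality to an exact limit, which requires ruling out a persistent favorable gap, i.e. establishing a matching lower bound $\varepsilon_t(\wu, \hatpvec) \geq -o(1)$. Nonnegativity of the regret is not automatic: the softmax weights $\univmu{r}{t}$ vary with $t$, so in a changing environment the blend can track whichever rank is momentarily best, and a convex combination of estimators may carry strictly smaller variance than any single component, which could in principle drive $\lf{t}{\wu, \hatpvec}$ below $\min_r \lossf[r]{t}$. Under the paper's stationarity assumption, however, a single rank minimizes the variance throughout the window, so I would argue that the per-sample quantities $\tfrac{1}{\tau}\lf{t}{\wu, \hatpvec}$ and $\tfrac{1}{\tau}\lossf[r]{t}$ each concentrate about their expectations and that the blend cannot outperform the best fixed rank by more than an $O(1/\tau)$ amount. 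Combining this lower bound with the upper bound from the previous paragraph sandwiches $\varepsilon_t(\wu, \hatpvec)$ between two sequences that vanish as $\tau\to\infty$, and the squeeze theorem then yields $\lim_{\tau\to\infty}\varepsilon_t(\wu, \hatpvec) = 0$.
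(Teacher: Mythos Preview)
Your derivation of the upper bound is exactly the argument the paper intends: the Corollary is stated without a separate proof, and the reader is meant to subtract $\tfrac{1}{\tau}\min_r \lossf[r]{t}$ from both sides of (\ref{eq:lossbound}) and observe that $\tfrac{2A}{\tau}\ln R \to 0$ since $A$ and $R$ are fixed. So the first two paragraphs of your proposal reproduce the paper's reasoning in full.

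Your third paragraph goes beyond the paper. You are right that (\ref{eq:lossbound}) by itself yields only $\limsup_{\tau\to\infty}\varepsilon_t \leq 0$, and that nothing in Theorem~\ref{thm:regret} prevents the time-varying blend from strictly beating every fixed rank on a given data sequence, so $\varepsilon_t < 0$ is not a priori excluded. The paper does not supply, or even mention, a matching lower bound; it simply asserts the exact limit as an immediate consequence of the upper bound. In the universal-prediction literature from which this framework is borrowed, the operative guarantee is precisely the one-sided statement that the blend asymptotically does \emph{at least as well as} the best fixed competitor, and a negative regret is regarded as harmless. Your instinct to seek a squeeze is mathematically more scrupulous than the paper, but the concentration argument you sketch would require additional stochastic assumptions that the paper explicitly avoids (the proof of Theorem~\ref{thm:regret} is advertised as holding for \emph{any} bounded individual sequence). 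For the purposes of matching the paper, the upper-bound step alone is what is expected.
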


This universal framework applies equally to any unbiased OSF, unifying them
under a common structure based on linear combinations of fixed weight vectors
from Sec. \ref{sec:generalized_osf}.
The specific form of the UOSF depends on the choice of these competing
weight vectors.
For the results on this paper, the weight vectors competing over rank in a
UOSF are either of TLOSF form or SAWP form, not a combination of both.
If the competing weight vectors are in the SAWP form as in
(\ref{eq:sawp_weights}), the resulting OSF is a universal SAWP (USAWP);
if the competing weight vectors are in the TLOSF form as in
(\ref{eq:tlosf_weights}), the resulting OSF is a universal TLOSF (UTLOSF).
Even though this paper focuses on universal versions of the TLOSF and SAWP,
the competing weight vectors in (\ref{eq:univ_weights}) can be of any subset of
unbiased OSFs $\mathcal{W} \subseteq \{\osw[]|
\mathrm{E}\{\osw[]\hatpvec\}=\psd\}$, with $ \wu[t] = \sum_{\osw[k] \in
\mathcal{W}} \univmu{k}{t} \osw[k]$.

\begin{figure*}
	\centering
	\includegraphics[width=.8\linewidth]{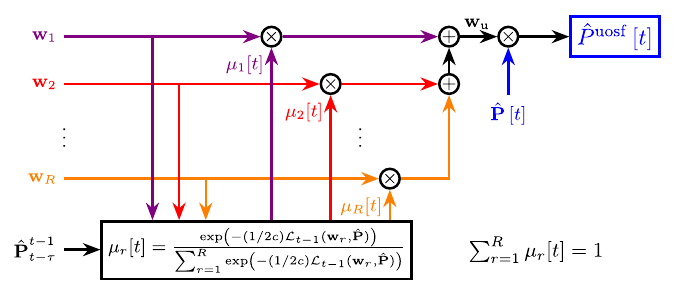}
	\caption{Block diagram describing the flow of information resulting in an
	UOSF.}
	\label{fig:block_diagram}
\end{figure*}

\section{Results}
\label{sec:results}

This section compares the performance of the USAWP, UTLOSF, and their
fixed-rank counterparts in two parts.
The first set of results compare the performance of the UOSFs and the
fixed-rank OSFs in computer simulations with synthetic data that mimics
periodograms of Gaussian distributed signals.
The second set of results compare the UOSFs to the WOSA on spectrograms of an
actual hydrophone recording.

The simulated data is created by a mixture containing two exponential random
variables with distinct means to model a background power of $\lambda$ with
occasional transients $K$ times louder that happen with probability $\rho[t]$:
\begin{equation}
\begin{aligned}
	f_{\tilde{\vc{P}}[t]} (\vc{x}) = (1-\rho[t]) \tfrac{1}{\lambda} &\exp (-\vc{x}/\lambda) +\\
	\rho[t] \tfrac{1}{K \lambda} &\exp(-\vc{x}/K \lambda).
\end{aligned}
\end{equation}
The OSFs estimate the target mean $\lambda$ from sorted realizations of the
simulation data $\hatpvec[t] = \sort (\tilde{\vc{P}}[t])$.
For the following results, the simulation parameters are: $\lambda = 1$,
$K=200$, $R=20$, $c = 1$, $\tau=250$, total iterations $T =
3000$.
The probability of outliers $\rho[t]$ starts at zero and increases by 0.02 (two
percent points) after every 500 iterations.

In the Monte Carlo trials, the UOSFs nearly match the variance and
MSE of their best fixed-rank counterpart OSF.
Figure \ref{fig:varmse1} plots the variance (left column) and MSE (right
column) for both the fixed-rank SAWP OSFs (top row, colored squares) and
fixed-rank TLOSF (bottom row, colored triangles) compared to their universal
counterparts (black symbols) as a function of the probability of outliers.
The color indicates the quantiles of the fixed-rank OSFs from lowest (purple)
to highest (yellow).
Each plotted point is an average of 3000 Monte Carlo trials at the iterations
499, 999, 1499, 1999, 2499, and 2999, which are the last iterations before the
probability of outliers increases.
As the probability of outliers increases from zero, the variances and MSEs of
the high-quantile fixed-rank OSFs increase by several orders of magnitude.
As the probability of outliers continues increasing, fixed-rank OSFs of modest
quantiles also suffer from outliers, increasing their variances and MSEs.
No single fixed-rank OSF remained as the minimum variance estimator over all
the probability of outliers.
The UOSFs preserve a variance that nearly matches the best fixed-rank
estimator by blending the fixed-rank estimators according to
(\ref{eq:univ_weights}) and (\ref{eq:univ_coeffs}).
Across the probability of outliers, the MSE of the USAWP is less than 0.6 dB
higher than the MSE of the best SAWP, while MSE of the UTLOSF is less than 0.8
dB higher than the MSE of the best TLOSF.

\begin{figure}
	\centering
	\includegraphics[width=\linewidth]{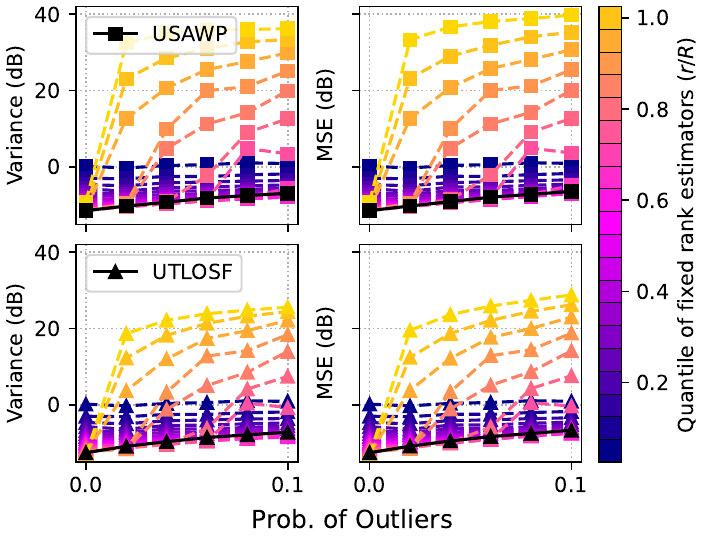}
	\caption{Variance (left panels) and MSE (right panels) of USAWP (top
	panels, black squares) and UTLOSF  (bottom panels, black triangles)
	compared to their fixed-rank counterparts (colored symbols) as a function
	of the probability of outliers in data.
	Color indicates the quantiles of the fixed-rank OSFs from lowest (purple)
	to highest (yellow).
	As the probability of outliers increases, both variance and MSE of the
	fixed-rank estimators increase by several orders of magnitude.
	In contrast, the UOSFs (black symbols) preserve a lower variance than
	any of their fixed-rank counterparts, leading to an overall lower MSE.}
	\label{fig:varmse1}
\end{figure}

The UTLOSF achieves a slight MSE reduction when compared to USAWP mostly due to
a reduction in variance.
Figure \ref{fig:varmse2} replots all the black symbols from Fig.
\ref{fig:varmse1} on the same axis to compare the variance and MSE of the USAWP
and UTLOSF estimators directly.
The UTLOSF performs slightly better than USAWP, having a 1.2 dB MSE
reduction in pristine data, and 0.3 dB MSE reduction in
data with 10\% of outliers.
As the probability of outliers increases, the difference between MSE and
variance increases for both estimators, indicating an increasing the
contribution to the MSE from the bias squared term.
The overall low difference (of up to 0.5 dB) between MSE and variance reveals
that the variance, not bias squared, dominates the MSE of the estimators,
especially in data with fewer outliers (smaller $\rho[t]$).

\begin{figure}
	\centering
	\includegraphics[width=0.6\linewidth]{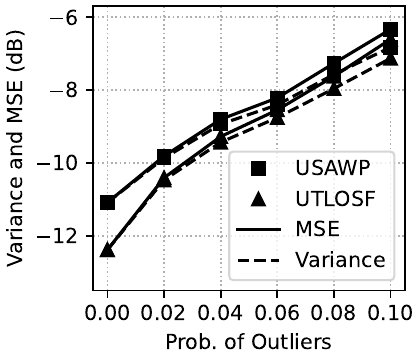}
	\caption{Variance (dashed lines) and MSE (solid lines) of the universal
	OSFs, as a function of the probability of outliers.
	As the occurrence of outliers increases, the gap between variance and MSE
	increases for both estimators, indicating a growing bias squared.
	However, the overall low difference between MSE and variance indicates that
	the MSE of both UTLOSF and USAWP increases mainly due to increasing
	variance.
	The UTLOSF (triangles) maintains a slightly lower variance than USAWP
	(squares), which resulted in a slightly lower MSE as well.}
	\label{fig:varmse2}
\end{figure}

The UOSFs quickly respond to the increasing occurrence of outliers by
reducing the blending weights of fixed-rank estimators whose sample second
moments were increased by outliers.
Consequently, this increases the blend weights of the lower-rank estimators
with lower second moments.
Figure \ref{fig:blend_weights} plots the blending weights of the
UTLOSF (top panel) and USAWP (bottom panel) as a function of
the time and quantile averaged over the Monte Carlo trials.
The color at each point represents the blend weight for that fixed-rank
estimator at that time.
In the first 500 iterations, when there are no outliers, the UTLOSF sets the
highest blending weight to the highest threshold rank, making the UTLOSF equal
to the sample mean, which is known to be the efficient estimator in the absence
of outliers.
Meanwhile, the USAWP sets the highest blending weight to the threshold rank
with quantile $r/R \approx 0.8$.
As shown in Fig. \ref{fig:varosf}, those are the threshold ranks that
minimize the variance of each OSF in pristine data.
As the probability of outliers increases after every 500 iterations, both
UOSFs quickly adjust their weights to increase the blending weights of
lower threshold ranks, while reducing the blending weights of higher ranks that
are now vulnerable to outliers.
This quick adaptation to the changes in data guarantees that the UOSFs'
performances are comparable to their best fixed-rank counterparts.

\begin{figure}
	\centering
	\includegraphics[width=0.98\linewidth]{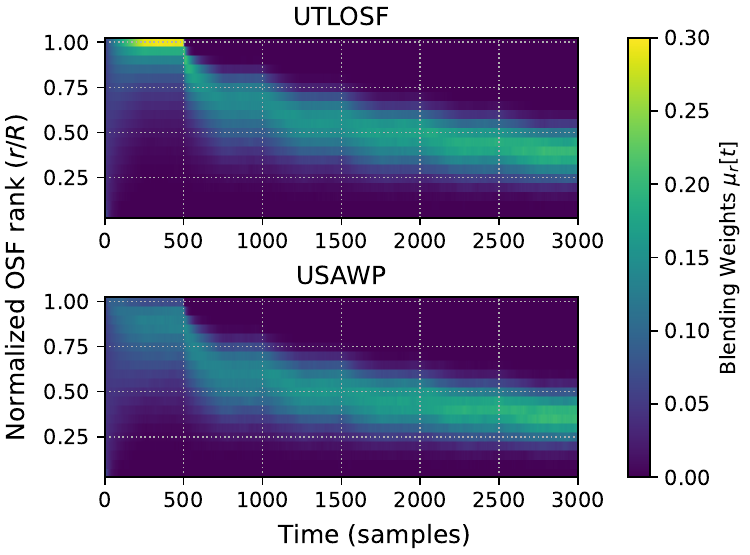}
	\caption{Color plot of the blending weights of the UOSFs as a
	function of quantile and time averaged over Monte Carlo trials, showing
	adaptation to the increasing probability of outliers.
	The probability of outliers in the simulation data increases by two percent
	points after every 500 time iterations.
	As the occurrence of outliers increases, the UOSFs reduce the
	blend weights of higher rank estimators, which are more prone to outlier bias,
	and increase the blend weights for the lower OSFs.}
	\label{fig:blend_weights}
\end{figure}

To evaluate the performance of the proposed UOSFs in real-world conditions, we
applied them to underwater acoustic data containing impulsive noise.
Figure \ref{fig:dock_data} compares the WOSA and both UOSFs when applied as
sliding windows to the spectrogram of a hydrophone recording.
The top-left panel displays the spectrogram of a recording obtained from a
hydrophone at the UMass Dartmouth School of Marine Science and
Technology in New Bedford, MA.
The HTI-96 hydrophone was about 2 m deep in water that varies between 3-4 m
deep with the tide.
The particular segment was recorded with a Zoom H5 sampling at 44.1 kHz on
August 21st, 2021, at 4 p.m..
Each vertical line of the spectrogram is a periodogram obtained with $L=512$
point Hanning window and $D=128$ point step between samples.
The bright vertical lines in the spectrogram are broadband clicks that are up
to 30 dB louder than the background noise, making them an example of loud
transients.

Both UOSFs smooth the spectrograms of underwater acoustic data while also
filtering occasional loud clicks in the recording.
The bottom-left panel plots the WOSA spectrogram with $R=100$ and $Q=1$.
All the spectrograms have triangles pointing at the same times and frequencies
of some occurrences of loud clicks in the recording, tracking the effect of the
different PSD estimators on those loud transients.
While the WOSA spectrogram is smoother than the raw unaveraged spectrogram (top
left panel), the WOSA spectrogram still has the very bright vertical lines due
to the loud broadband clicks in the recording.
The time-averaging of the WOSA spectrogram increases the width of these loud
outliers in time.
In contrast, the panels on the right side plot the UOSF spectrograms computed
with $\tau=500$ and $c=10^5$.
Both USAWP and UTLOSF result in smooth spectrograms, but without the bright
lines, meaning that they effectively filter the impulsive noise in the input
data.
The UOSF results are undistinguishable from each other because, as shown in
Fig. \ref{fig:varmse2}, the difference in performance between USAWP and UTLOSF
is small enough to not result in perceptual difference in this visualization.

\begin{figure*}
	\centering
	\includegraphics[width=0.82\linewidth]{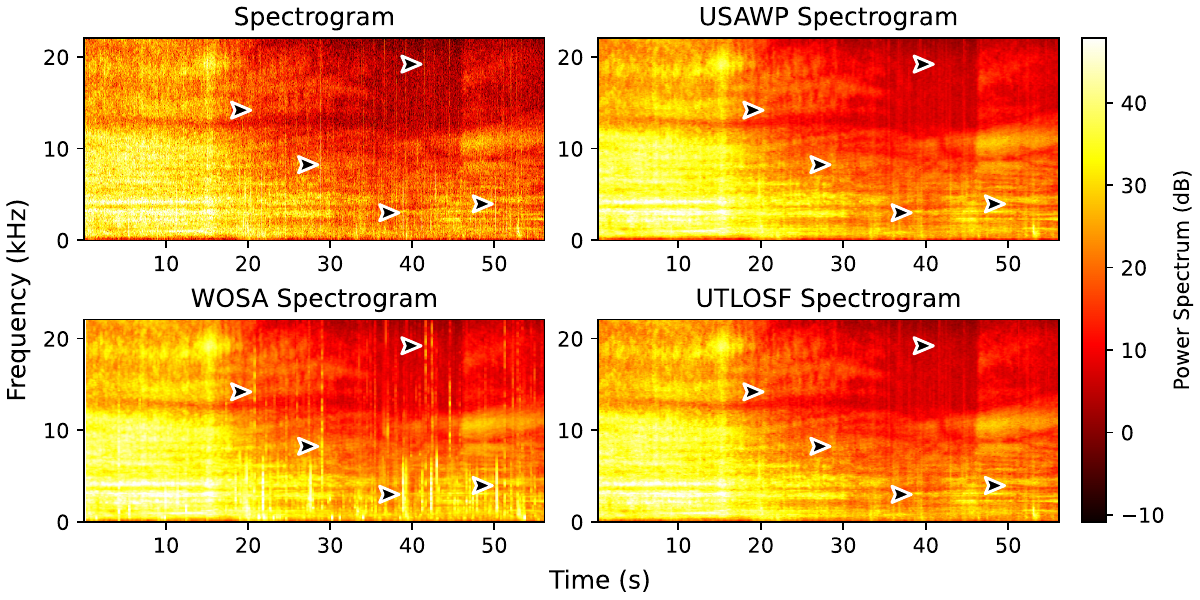}
	\caption{Comparison of the raw (top-left), WOSA (bottom-left), USAWP
	(top-right), and UTLOSF (bottom-right) spectrogram estimates.
	The WOSA spectrogram exhibits bright vertical lines caused by broadband
	clicks in the hydrophone recordings.
	In contrast, both UOSF spectrograms (USAWP and UTLOSF) effectively suppress
	these transients, resulting in cleaner and smoother spectrograms.
	The triangles point to the time and frequency of some occurrences of loud
	broadband clicks in the recording.}
	\label{fig:dock_data}
\end{figure*}

\section{Conclusion}
\label{sec:conclusion}

This paper proposes universal SAWP and TLOSF PSD estimators that are adaptive
over the chosen threshold rank, making them more suitable for real-time
background noise PSD estimation in environments with occasional loud
transients.
The proposed UOSFs blend the fixed-rank OSFs according to their short-time
performance as measured by the unnormalized sample second moment.
By leveraging the sample second moment as a performance criterion, the UOSF
dynamically suppresses the contribution of fixed-rank OSFs impacted by loud
transients, reducing the variance and MSE.
The universality of the method provides provable guarantee that the variance of
the UOSF processing data online asymptotically converges to the variance of the
best fixed-rank OSF that one would choose in hindsight if processing the data
offline in batch mode.
Moreover, these performance guarantees hold for any individual data sequence
bounded in amplitude.
Monte Carlo simulations using synthetic data confirm that both the USAWP
and UTLOSF estimators maintain MSEs comparable to the best fixed-rank OSFs,
while simultaneously mitigating the bias introduced by loud transients.
When applied to real hydrophone recordings containing impulsive noise, the
UOSFs effectively suppressed loud broadband clicks in the recordings, yielding
smooth spectrograms of the background noise.
The UOSF would be beneficial in the underwater acoustic applications that
depend on accurate and precise estimators of the background noise PSD, such as
CFAR detection, and meteorological remote sensing.

\section*{Acknowledgements}

This project is supported by the ONR Code 321US on project N00014-23-1-2133.
The authors would like to thank CJ Berg for providing the hydrophone data.

\section*{Author Declarations}
\subsection*{Conflict of Interest}

The authors have no conflict of interest to disclose.

%The SAWP estimator modifies the WOSA spectral estimator by replacing the
%averaging with an order statistic of the periodogram normalized by its bias.
%By replacing the averaging of the WOSA, the SAWP increases its robustness to
%loud transients in acoustic recordings, but introduces a new challenge in its
%process: whats the highest rank OS to choose to minimize the estimator variance
%while still avoiding the loud transients.

\appendix[Proof of Theorem \ref{thm:regret}]
To establish performance bounds of the blended UOSFs relative to their
best fixed-rank OSFs, this proof compares both to a hypothetical
uniformly weighted OSF.
The uniformly weighted OSF is only hypothetical because it is not meant to be
implemented, but serves as a common reference point to compare the performances
of both OSFs in order to establish the desired inequality.
As a mathematical artifice to compare the performances of the UOSFs to their
respective OSFs, this proof defines functions that turn the loss of the OSFs
into pseudo-probabilities of assignment, however the proof does not require a
specific probability distribution to hold.
The bounds established hold for any individual data sequence satisfying the
assumptions stated in Theorem \ref{thm:regret}.

Considering the loss function of an arbitrary OSF with weight vector
$\vc{w}$ at the time iteration $\tau$:
\begin{equation*}
\lossf[]{\tau} \stackrel{\Delta}{=} \sum_{t=1}^\tau (\vc{w}^T \pordervec{t})^2,
\end{equation*}
and the pseudo-probability of a r-th rank OSF as function of the loss:
\begin{equation*}
\mathrm{p}_r (\hatpvec^\tau) \stackrel{\Delta}{=} B \exp \left(-(1/2c) \lossf[r]{\tau}\right),
\end{equation*}
where $\hatpvec^\tau$ is the set of consecutive samples of ordered periodograms
$\{\pordervec{1}, \dots, \pordervec{\tau}, \}$.
From the pseudo-probability of the fixed-rank OSFs follows the pseudo-probability of
the uniformly weighted OSF:
\begin{equation*}
\mathrm{p}_\mathrm{avg} (\hatpvec^\tau) = \frac{1}{R} \sum_{r=1}^R \mathrm{p}_r (\hatpvec^\tau).
\end{equation*}
From the previous two definitions and the nonnegativity of $\mathrm{p}_r
(\hatpvec^\tau)$, we can establish the first inequality:
\begin{equation}
\mathrm{p}_\mathrm{avg} (\hatpvec^\tau) \geq \frac{1}{R}  \max_r \mathrm{p}_r
(\hatpvec^\tau),
\label{eq:pavg_inequality}
\end{equation}
In this context, the maximum pseudo-probability $\mathrm{p}_r (\hatpvec^\tau)$
across the threshold ranks $r$ is the pseudo-probability of the best fixed-rank
OSF, i.e., the fixed rank OSF with the minimum loss across the threshold ranks.

For the next inequalities, it will be useful to express the r-th rank OSF
pseudo-probability as a product of a sequence of conditional pseudo-probabilities.
The pseudo-probability of the r-th rank OSF at time $\tau$ conditioned on the previous
data samples is:
\begin{equation*}
\begin{aligned}
\mathrm{p}_r (\pordervec{\tau} | \hatpvec^{\tau-1})
	&= \frac{\mathrm{p}_r (\hatpvec^\tau)}{\mathrm{p}_r (\hatpvec^{\tau-1})}\\
&= \exp \left(-\tfrac{1}{2c} (\osw[r] \pordervec{\tau})^2\right)\\
	&=f_\tau(\osw[r])
\end{aligned}
\end{equation*}
where
\begin{equation*}
f_t(\vc{w}) \stackrel{\Delta}{=} \exp (-\tfrac{1}{2c} (\vc{w}^T \pordervec{t})^2 ).
\end{equation*}
Successive conditioning expresses the pseudo-probability of the r-th rank OSF
as a product of conditional pseudo-probabilities:
\begin{equation*}
\begin{aligned}
\mathrm{p}_r (\hatpvec^\tau)
	&= \prod_{t=1}^\tau
\mathrm{p}_r (\pordervec{t} | \hatpvec^{t-1})\\
&= B \prod_{t=1}^\tau f_t(\vc{w}),
\end{aligned}
\end{equation*}
with $\mathrm{p}_r (\pordervec{1} | \hatpvec^{0}) = \mathrm{p}_r
(\hatpvec^{1})$.

These same principles apply when determining the uniformly-weighted OSF
pseudo-probability conditioned on the previous samples of data:
\begin{equation*}
\begin{aligned}
\mathrm{p}_\mathrm{avg} (\pordervec{\tau} | \hatpvec^{\tau-1}) &=
	\frac{\sum_{r=1}^R \mathrm{p}_r (\hatpvec^\tau) }
{\sum_{k=1}^R \mathrm{p}_k (\hatpvec^{\tau-1}) }\\
&=\frac{\sum_{r=1}^R \mathrm{p}_r (\pordervec{\tau} | \hatpvec^{\tau-1})
	\mathrm{p}_r (\hatpvec^{\tau-1}) }
	{\sum_{k=1}^R \mathrm{p}_k (\hatpvec^{\tau-1}) }.
\end{aligned}
\end{equation*}
Since
\begin{equation*}
\begin{aligned}
\frac{\mathrm{p}_r (\hatpvec^{\tau-1}) }
	{\sum_{k=1}^R \mathrm{p}_k (\hatpvec^{\tau-1}) }
	&= \frac{ \exp \left(-(1/2c) \lossf[r]{\tau-1}\right) }
	{\sum_{k=1}^R \exp \left(-(1/2c) \lossf[k]{\tau-1}\right)}\\
&= \univmu{r}{\tau},
\end{aligned}
\end{equation*}
therefore
\begin{equation*}
\mathrm{p}_\mathrm{avg} (\pordervec{\tau} | \hatpvec^{\tau-1})
= \sum_{r=1}^R \univmu{r}{\tau} f_\tau (\osw[r]),
\end{equation*}
meaning that the conditional pseudo-probability of the uniformly-weighted OSF can be
expressed in terms of the universal OSF weights.
By applying successive conditioning, the pseudo-probability of the uniformly-weighted
OSF becomes:
\begin{equation*}
\mathrm{p}_\mathrm{avg} (\hatpvec^\tau) =  B \prod_{t=1}^\tau \sum_{r=1}^R
\univmu{r}{t} f_t (\osw[r]).
\end{equation*}

%\begin{equation}
%\mathrm{p}_r (\hatpvec^\tau) = B \prod_{t=1}^\tau f_t (\osw[r])
%\end{equation}

Now we use the previous definitions to determine the pseudo-probability of the
universal OSF:
\begin{equation*}
\begin{aligned}
\mathrm{p}_\mathrm{u} (\hatpvec^\tau) &=
B \exp \left(-\tfrac{1}{2c} \lf{\tau}{\wu, \hatpvec}\right) \\
&= B \exp \left(-\tfrac{1}{2c} \sum_{t=1}^\tau (\wu^T \pordervec{t})^2 \right) \\
&= B \prod_{t=1}^\tau \exp \left(-\tfrac{1}{2c} \left(\sum_{r=1}^R \univmu{r}{t} \osw[r]^T \pordervec{t} \right)^2 \right) \\
&= B \prod_{t=1}^\tau f_t \left(\sum_{r=1}^R \univmu{r}{t} \osw[r]  \right)
\end{aligned}
\end{equation*}

Expanding the right side of the inequality in (\ref{eq:pavg_inequality}) and
applying natural logarithm to both sides yields:
\begin{equation*}
\begin{aligned}
- \ln (\mathrm{p}_\mathrm{avg} (\hatpvec^\tau)) &\leq \ln R - \min_r \ln (\mathrm{p}_r (\hatpvec^\tau) )\\
& \leq \ln R - \ln B + \tfrac{1}{2c} \min_r \lossf[r]{\tau}.
\end{aligned}
\end{equation*}

If $c \geq A \geq (\vc{w}^T \pordervec{t})^2 \forall t$, then
$f_t(\vc{w})$ will be concave. From Jensen's inequality, we have:
\begin{equation*}
f_t \left(\sum_{r=1}^R \univmu{r}{t} \osw[r]  \right)
	\geq \sum_{r=1}^R \univmu{r}{t} f_t (\osw[r]).
\end{equation*}
Therefore,
\begin{equation*}
\begin{aligned}
\mathrm{p}_\mathrm{u} (\hatpvec^\tau)
	&\geq \mathrm{p}_\mathrm{avg} (\hatpvec^\tau),\\
-\ln (\mathrm{p}_\mathrm{u} (\hatpvec^\tau))
	%\leq -\ln (\mathrm{p}_\mathrm{avg} (\hatpvec^\tau))
	&\leq  \tfrac{1}{2c} \min_r \lossf[r]{\tau} + \ln(R/B),\\
%\tfrac{1}{2c} \lf{\tau}{\wu, \hatpvec} - \ln B
	%&\leq  \tfrac{1}{2c} \min_r \lossf[r]{\tau} + \ln R - \ln B\\
\tfrac{1}{2c} \lf{\tau}{\wu, \hatpvec}
	&\leq \tfrac{1}{2c} \min_r \lossf[r]{\tau} + \ln R,\\
\tfrac{1}{\tau}\lf{\tau}{\wu, \hatpvec}
	&\leq \tfrac{1}{\tau} \min_r \lossf[r]{\tau} + \tfrac{2c}{\tau}\ln R.
\end{aligned}
\end{equation*}
Setting the hyperparamater $c$ to be equal to $A$, which is the upper bound of
the OSF output, yields Theorem \ref{thm:regret}.

Note that this proof makes only two assumptions: that the OSF is an unbiased
estimator, i.e. $\mathrm{E}\{\vc{w}\pordervec{t}\} = \psd$, and that the output of
the OSF is bounded $(\vc{w}^T \pordervec{t})^2 \leq A \forall t$.
A bounded periodogram follows from the bounded amplitude of recorded data $x[l]$.
Neither the competing weight vectors need to be in the forms Section
\ref{sec:generalized_osf} describes, nor the parent distribution of the OS
vectors $\pordervec{t}$ need to be exponential.
The same optimality proof applies to blended OSF for other probability
distributions, such as those listed in \cite{bovik_generalization_1983}.

\bibliographystyle{plain}
\bibliography{references.bib}
\end{document}